\theoremstyle{plain}
\newtheorem{theorem}{Theorem}
\newtheorem{lemma}[theorem]{Lemma}
\newtheorem{corollary}[theorem]{Corollary}
\newtheorem{proposition}[theorem]{Proposition}
\theoremstyle{definition}
\newtheorem{example}[theorem]{Example}
\newtheorem{question}[theorem]{Question}
\theoremstyle{remark}
\newcommand{\pref}{\leq_p}
\newcommand{\suff}{\leq_s}
\title{\bf Infinite square-free self-shuffling words}
\author{Mike M\"uller\thanks{Supported in part by the DFG under grant 582014.}\\
\small Institut f\"ur Informatik\\[-0.8ex]
\small Christian-Albrechts-Universit{\"a}t zu Kiel\\[-0.8ex]
\small Germany\\
\small\tt mimu@informatik.uni-kiel.de\\
\and
Svetlana Puzynina \\
\small LIP, ENS de Lyon\\[-0.8ex]
\small Universit\'e
de Lyon, France\\
\small and Sobolev Institute of Mathematics \\[-0.8ex]
\small Novosibirsk, Russia\\[-0.8ex]
 \small\tt s.puzynina@gmail.com  \and
Micha\"el Rao\\
\small LIP, CNRS, ENS de Lyon\\[-0.8ex]
\small Universit\'e
de Lyon, France\\
\small\tt michael.rao@ens-lyon.fr }
\date{\dateline{Jan 1, 2012}{Jan 2, 2012}\\
\small Mathematics Subject Classifications: 68R15}
\begin{document}

\maketitle


\begin{abstract}
In this paper we answer two recent questions from \cite{CKPZ} and
\cite{Harju} about self-shuffling words. An infinite word $w$ is
called self-shuffling, if $w=\prod_{i=0}^\infty
U_iV_i=\prod_{i=0}^\infty U_i=\prod_{i=0}^\infty V_i$ for some
finite words $U_i$, $V_i$. Harju \cite{Harju} recently asked
whether square-free self-shuffling words exist. We answer this
question affirmatively. Besides that, we build an infinite word
such that no word in its shift orbit closure is self-shuffling,
answering positively a question from \cite{CKPZ}.

  \bigskip\noindent \textbf{Keywords:} infinite words, shuffling,
 square-free words, shift orbit closure, self-shuffling words
\end{abstract}

\section{Introduction}
A self-shuffling word, a notion which was recently introduced by
Charlier et al. \cite{CKPZ}, is an infinite word that can be
reproduced by shuffling it with itself. More formally, given
infinite words $x, y \in \Sigma^\omega$ over a finite alphabet
$\Sigma,$ we define ${\mathscr{S}}(x,y)\subseteq \Sigma^\omega$ to
be the collection of all infinite words $z$ for which there exists
a factorization
\[
z=\prod_{i=0}^\infty U_i V_i
\]
with each $U_i, V_i \in \Sigma^*$ and with $x=\prod_{i=0}^\infty
U_i$, $y=\prod_{i=0}^\infty V_i$. An infinite word $w\in
\Sigma^\omega$ is \emph{self-shuffling} if
$w\in{\mathscr{S}}(w,w)$. Various well-known words, e.g., the
Thue-Morse word or the Fibonacci word, were shown to be
self-shuffling.

Harju \cite{Harju} studied shuffles of both finite and infinite
square-free words, i.e., words that have no factor of the form
$uu$ for some non-empty factor $u$. More results on square-free
shuffles were obtained independently by Harju and M\"uller
\cite{HaMU13}, and Currie and Saari \cite{CuSa14}. However, the
question about the existence of an infinite square-free
self-shuffling word, posed in \cite{Harju}, remained open. We give
a positive answer to this question in Sections 2~and~3.

The \emph{shift orbit closure} $S_w$ of an infinite word $w$ can
be defined, e.g., as the set of infinite words  whose sets of
factors are contained in the set of factors of $w$. In \cite{CKPZ}
it has been proved that each word has a non-self-shuffling word
in its shift orbit closure, and the following question has been
asked: Does there exist a word for which no element of its shift
orbit closure is self-shuffling (Question 7.2)? In Section 4 we
provide a positive answer to the question.
More generally, we show the existence of a word such that for any
three words $x,y,z$ in its shift orbit closure, if $x$ is a
shuffle of $y$ and $z$, then the three words are pairwise
different. On the other hand, we show that for any infinite word
there exist three different words $x,y,z$ in its shift orbit closure such
that $x \in \mathscr{S} (y,z)$ (see Proposition~\ref{prop_xyz}).

Apart from the usual concepts in combinatorics on words, which can
be found for instance in the book of Lothaire \cite{Lothaire}, we
make use of the following notations: For every $k \geq 1$, we
denote the alphabet $\{0, 1, \ldots, k-1\}$ by $\Sigma_k$. For a
word $w = uvz$ we say that $u$ is a \emph{prefix} of $w$, $v$ is a
\emph{factor} of $w$, and $z$ is a \emph{suffix} of $w$. We denote
these prefix- and suffix relations by $u \pref w$ and $v \suff w$,
respectively.
By $w[i,j]$ we denote the factor of $w$ starting at position $i$ and
ending after position $j$.
Note that we start numbering the positions with $0$.

A \emph{prefix code} is a set of words with the property that none
of its elements is a prefix of another element. Similarly, a
\emph{suffix code} is a set of words where no element is a suffix
of another one. A \emph{bifix code} is a set that is both a prefix
code and a suffix code. 
A morphism $h$ is \emph{square-free} if for all square-free words
$w$, the image $h(w)$ is square-free.

\section{A square-free self-shuffling word on four letters}
Let $g : \Sigma_4^* \rightarrow \Sigma_4^*$ be the morphism
defined as follows:
\begin{align*}
    g(0) &= 0121, \\
    g(1) &= 032, \\
    g(2) &= 013, \\
    g(3) &= 0302.
\end{align*}

We will show that the fixed point $w = g^\omega(0)$ is square-free
and self-shuffling in the following. Note that $g$ is not a
square-free morphism, that is, it does not preserve
square-freeness, as $g(23) = 0130302$ contains the square $3030$.

\begin{lemma}\label{no_3u1u3}
    The word $w = g^\omega(0)$ contains
    no factor of the form $3u1u3$ for some $u \in \Sigma_4^*$.
\end{lemma}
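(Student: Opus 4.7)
The plan is to combine tight local constraints on the letters $1$ and $3$ in $w$ with the unique parsing of $w$ into $g$-image blocks, in order to reduce any supposed occurrence of $3u1u3$ to a strictly shorter one.

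First, by inspecting $g(0)=0121$, $g(1)=032$, $g(2)=013$, $g(3)=0302$ and using the fact that every $g$-image begins with $0$, one reads off that in $w$ every $3$ is preceded by $0$ or $1$ and followed by $0$ or $2$, while every $1$ is preceded by $0$ or $2$. This already handles the case $u=\varepsilon$, since the factor $313$ would require $3$ to be followed by $1$.

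Next I observe that $\{g(0),g(1),g(2),g(3)\}$ is a prefix code: all four images begin with $0$, then $g(0),g(2)$ branch to $1$ while $g(1),g(3)$ branch to $3$, and the third letter distinguishes the remaining pairs. Hence $w$ admits a unique factorization into $g$-blocks, and one can speak unambiguously of the intra-block position of any letter of $w$. Combining this with the local constraints above, each $3$ sits at position $1$ of a $g(1)$-block, position $2$ of a $g(2)$-block, or position $1$ or $3$ of a $g(3)$-block, and each $1$ sits at position $1$ or $3$ of a $g(0)$-block or at position $1$ of a $g(2)$-block.

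I then assume, for contradiction, that $3u1u3$ is a factor of $w$ with $|u|\geq 1$ minimal, and locate the two $3$'s and the middle $1$ inside their $g$-blocks. Since the same word $u$ appears on both sides of the middle $1$, comparing the letters of $u$ against the constraints coming from the block structure forces the $g$-parsings on the left and on the right of the $1$ to be synchronized. Peeling off one application of $g$ from the whole factor then yields a strictly shorter factor $3u'1u'3$ of $w$, contradicting the minimality of $|u|$. The main obstacle is precisely this synchronization step: because $g$ is non-uniform (images of length $3$ and $4$), block boundaries on the two sides of the $1$ do not automatically align, and one has to argue that, among the few intra-block configurations available for the three landmark letters, the only arrangements compatible with the repetition $u\cdots u$ are those which descend cleanly under $g$. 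Given the small number of configurations involved, this reduces to a finite, if somewhat tedious, case check.
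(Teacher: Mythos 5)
Your setup is sound: the local facts you list (every $3$ in $w$ is preceded by $0$ or $1$ and followed by $0$ or $2$, every $1$ is preceded by $0$ or $2$, the images form a prefix code) are all correct and easily checked, and they do dispose of the case $u=\varepsilon$. But the core of your argument --- the claim that the two $g$-parsings on either side of the middle $1$ must synchronize, and that desubstitution then turns $3u1u3$ into a strictly shorter factor $3u'1u'3$ --- is only announced, not carried out. You yourself flag this synchronization as ``the main obstacle'' and defer it to ``a finite, if somewhat tedious, case check'' that never appears. That deferred step is the entire content of the lemma, so as written this is a genuine gap. It is also not merely a matter of bookkeeping: since the three landmark letters $3,1,3$ sit at various intra-block positions (e.g.\ the final $3$ of $g(2)=013$ desubstitutes to the letter $2$, not $3$), it is far from automatic that the descended factor again has the shape $3u'1u'3$; the induction hypothesis would likely need to be reformulated, and the base cases beyond $u=\varepsilon$ verified.

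For comparison, the paper's proof needs no induction and no block parsing at all. It pins down the boundary letters of $u$ directly: the last letter of $u$ precedes the final $3$, so it is $0$ or $1$, and it cannot be $1$ because $u1$ would then contain $11$; hence $u$ ends in $0$. The first letter of $u$ follows a $3$ (so is $0$ or $2$) in the first copy and follows $01$ (so is $2$ or $3$) in the second copy, hence $u$ starts with $2$; then $012$ at the centre forces $u$ to start with $21$, and $321$ is not a factor of $w$. You would do better to look for this kind of direct local contradiction before setting up a desubstitution argument whose descent step is delicate.
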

\begin{proof}
    We assume that there exists a factor of the form $3u1u3$ in $w$, for some word $u \in \Sigma_4^*$.
    From the definition of $g$, we observe that $u$ can not be empty.
    Furthermore, we see that every $3$ in $w$ is preceded by either $0$ or $1$.
    If $1 \suff u$, then we had an occurrence of the factor $11$ in $w$, which is not possible by the definition of $g$, hence $0 \suff u$.
    Now, every $3$ is followed by either $0$ or $2$ in $w$ and $01$ is followed by either $2$ or $3$.
    Since both $3u$ and $01u$ are factors of $w$, we must have $2 \pref u$.
    This means that the factor $012$ appears at the center of $u1u$, which can only be followed by $1$ in $w$, thus $21 \pref u$.
    However, this results in the factor $321$ as a prefix of $3u1u3$, which does not appear in $w$, as seen from the definition of $g$.
\end{proof}

\begin{lemma}
    The word $w = g^\omega(0)$ is square-free.
\end{lemma}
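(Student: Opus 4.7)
The plan is to argue by contradiction: suppose that $w$ contains a square, and let $XX$ be a factor of $w$ with $|X| \geq 1$ and $|X|$ minimal. I split the argument according to the size of $|X|$: a small case handled by direct inspection, and a large case handled by a desubstitution argument that will draw its contradictions from the minimality of $|X|$ and from Lemma~\ref{no_3u1u3}.

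For the small case, there is an effectively computable bound $N$ (in terms of the maximum image length, here $4$) such that every factor of $w$ of length at most $2N$ occurs already in some bounded prefix $g^k(0)$. So squares with $|X| \leq N$ can be ruled out by inspecting a finite prefix. The bound $N$ will be chosen large enough that the desubstitution step below applies whenever $|X| > N$.

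For the large case, the key structural fact is that the $g$-block decomposition of $w$ is uniquely recoverable from the letters themselves. Indeed, every image $g(a)$ begins with $0$, no image ends with $0$, and the only ``internal'' occurrence of $0$ inside an image is the third letter of $g(3) = 0302$. Moreover, a block-start $0$ is always followed by $1$ or $3$ (from $g(0), g(1), g(2), g(3)$), while the internal $0$ of $g(3)$ is always followed by $2$. Hence every $0$ in $w$ can be unambiguously labelled as ``block start'' or ``inside $g(3)$'', and the block boundaries are determined. Given the square $XX$, I locate where its two halves sit inside this block decomposition and record, for each half, the offset at which it begins inside its containing block.

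From here the argument becomes a case analysis on these two offsets. If the two offsets coincide, then shifting the square to the nearest block boundary on each side exhibits $XX$ as an image $g(Y Y)$ of a shorter square $YY$ in the preimage word (which is again $w$, since $w$ is a fixed point), contradicting the minimality of $|X|$. If the offsets differ, I use the local transition constraints (which letters can follow $0$, $01$, $03$, $030$ etc.) to show that the letters immediately before the first $X$ and after the second $X$, together with the centre of $XX$, must arrange themselves into a factor of the form $3u1u3$ for some $u$; this contradicts Lemma~\ref{no_3u1u3}. The main obstacle will be this second situation: because the image lengths are mixed ($3$ and $4$), several sub-cases of offset pairs arise, and in each one I have to track the few possible letters on either side of the misaligned square and verify that the forbidden pattern $3u1u3$ is forced. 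Once both situations yield a contradiction, no square can exist, and $w$ is square-free.
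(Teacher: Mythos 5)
Your overall strategy --- take a minimal square, use the recoverability of the $g$-block boundaries from the letters to desubstitute, and fall back on Lemma~\ref{no_3u1u3} when desubstitution fails to produce a shorter square --- is exactly the strategy of the paper's proof; the paper merely organizes the case analysis by the first two letters of the square's period rather than by block offsets. But as written your argument has two concrete gaps.

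First, the claim that coinciding offsets let you ``shift the square to the nearest block boundary'' and read $XX$ off as $g(YY)$ is too quick. Writing each copy of $X$ as (partial block $s$)$\,g(Y)\,$(partial block $p$), what the block decomposition of $w$ actually gives you is the aligned factor $g(a)\,g(Y)\,g(c)\,g(Y)\,g(b)$, where the middle block $g(c)=ps$ straddles the centre of the square, $s$ is only known to be a common suffix of $g(a)$ and $g(c)$, and $p$ only a common prefix of $g(c)$ and $g(b)$. To extract a shorter square $aYaY$ or $YcYc$ you must still prove $a=c$ or $c=b$; this requires checking that no two distinct images of $g$ of equal length share a nonempty common suffix, and invoking the prefix-code property when $s=\varepsilon$. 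These facts hold for $g$, but they are doing real work and are not stated. Second, and more seriously, the misaligned case --- the only place Lemma~\ref{no_3u1u3} can enter, and the entire reason that lemma exists --- is announced rather than proved: you say you ``have to track the few possible letters on either side'' of the square, but you never do. The bookkeeping is short once set up (mismatched block lengths force the shared partial block to be the common suffix $2$ of $g(1)=032$ and $g(3)=0302$, and the only configuration that does not desubstitute to a shorter square is $g(3)\,g(Y)\,g(1)\,g(Y)\,g(3)$, i.e.\ the forbidden pattern $3Y1Y3$), but until it is carried out the lemma is not established.
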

\begin{proof}
    We first observe that $\{g(0), g(1), g(2), g(3)\}$ is a bifix code.
    Furthermore, we can verify that there are no squares $uu$ with $|u| \leq 3$ in $w$.
    Let us assume now, that the square $uu$ appears in $w$ and that $u$ is the shortest word with this property.
    If $u = 02u'$, then $u' = u''03$ must hold, since $02$ appears only as a factor of $g(3)$, and thus $uu$ is a suffix of the factor $g(3)u''g(3)u''$ in $w$.
    As $w = g(w)$, also the shorter square $3g^{-1}(u'')3g^{-1}(u'')$ appears in $w$, a contradiction.
    The same desubstitution principle also leads to occurrences of shorter squares in $w$ if $u = xu'$ and $x \in \{01, 03, 10, 12, 13, 21, 30, 32\}$.

    If $u = 2u'$ then either $03 \suff u$ or $030 \suff u$ or $01 \suff u$, by the definition of $g$.
    In the last case, that is when $01 \suff u$, we must have $21 \pref u$, which is covered by the previous paragraph.
    If $u' = u''030$, then $uu$ is followed by $2$ in $w$ and we can desubstitute to obtain the shorter square $g^{-1}(u'')3g^{-1}(u'')3$ in $w$.
    If $u = 2u'$ and $u' = u''03$, and $uu$ is preceded by $03$ or followed by $2$ in $w$, we can desubstitute to $1g^{-1}(u'')1g^{-1}(u'')$ or $g^{-1}(u'')1g^{-1}(u'')1$, respectively.
    Therefore, assume that $u = 2u''03$ and $uu$ is preceded by $030$ and followed by $02$ in $w$.
    This however means that we can desubstitute to get an occurrence of the factor $3g^{-1}(u'')1g^{-1}(u'')3$ in $w$, a contradiction to Lemma \ref{no_3u1u3}.
\end{proof}

We now show that $w = g^\omega(0)$ can be written as
$w=\prod_{i=0}^\infty U_iV_i=\prod_{i=0}^\infty
U_i=\prod_{i=0}^\infty V_i$ with $U_i,V_i \in \Sigma_4^*$.

\begin{lemma}
    The word $w = g^\omega(0)$ is self-shuffling.
\end{lemma}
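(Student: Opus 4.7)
The plan is to exhibit an explicit decomposition $w = \prod_{i=0}^\infty U_i V_i$ with $\prod_{i=0}^\infty U_i = \prod_{i=0}^\infty V_i = w$. Since the $U_i$ and $V_i$ are allowed to be empty, producing such a factorization is equivalent to choosing a $2$-coloring of the positions of $w$ so that the letters of each color class, read in order, again give $w$. Writing $u_n$ and $v_n$ for the number of positions of the two colors in $[0, n)$, the task reduces to building an infinite sequence of states $(u_n, v_n)$ with $(u_0, v_0) = (0, 0)$, $u_n + v_n = n$, and $w[n] \in \{w[u_n], w[v_n]\}$ at every $n$; the color chosen at position $n$ then advances the corresponding pointer.

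I would construct such a trajectory using the identity $w = g(w)$ together with the fact that every $g(a)$ begins with $0$. This means that $0$ recurs with bounded gap in $w$, so the $V$-stream can be started or restarted at any such occurrence without immediately violating its prefix condition. A natural scheme is to alternate long streaks of single-color segments, using the $0$ at each $g$-boundary as a gear to switch between the two color runs; the hope is that the lengths of these streaks match so that both pointers read successive prefixes of $w$ in lockstep with the substitution structure. Concretely, I would prescribe, for each letter $a \in \Sigma_4$, a finite color pattern on $g(a)$ together with prescribed entry/exit states of the pointer pair, and then verify that concatenating these patterns according to the parse $w = g(w[0]) g(w[1]) g(w[2]) \cdots$ is consistent at every block boundary. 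Iterating this idea at each substitution level $n$ should lift a valid coloring of $g^n(0)$ to a valid coloring of $g^{n+1}(0)$, yielding the desired infinite factorization in the limit.

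The main obstacle is identifying the right invariant on $(u_n, v_n)$ that is preserved under one application of $g$; without such an invariant, the local color patterns need not glue into a consistent global trajectory. Because the image lengths $|g(a)| \in \{3, 4\}$ differ and $g$ is not square-free, the admissible color patterns on each $g(a)$ are sensitive to the adjacent $g$-images, so I expect the verification to reduce to a finite case analysis over the letters and contexts appearing in $w$, relying on the bifix property of $\{g(0), g(1), g(2), g(3)\}$ and the restricted bigram structure of $w$ (in the spirit of Lemma~\ref{no_3u1u3}) to bound which transitions are admissible.
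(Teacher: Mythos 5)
There is a genuine gap: what you have written is a strategy, not a proof. The statement is purely existential---it is proved by exhibiting a concrete self-shuffle---and your proposal stops exactly where the real work begins. You correctly identify that the self-similarity $w = g(w)$ and the fact that every $g(a)$ starts with $0$ should drive the construction, but you never write down the blocks $U_i, V_i$ (equivalently, the $2$-coloring), and you explicitly flag the ``main obstacle'' of finding an invariant on the pointer pair $(u_n, v_n)$ that is preserved under one application of $g$ without resolving it. That invariant is the entire content of the lemma. Moreover, the lifting scheme you sketch---prescribing a color pattern on each $g(a)$ and gluing along the parse $w = g(w[0])g(w[1])\cdots$---faces a difficulty you do not address: the condition $w[n] \in \{w[u_n], w[v_n]\}$ is global (both color classes must read \emph{prefixes of $w$}, not just locally consistent factors), so a letter-by-letter local case analysis over contexts cannot by itself certify the construction.

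The paper's proof resolves this by writing down an explicit eventually $g$-periodic family of blocks, namely $U_{6i+j} = g^i(\alpha_j)$ and $V_{6i+j} = g^i(\beta_j)$ for fixed finite words $\alpha_j, \beta_j$ (for $j = 2, \dots, 7$), plus a short initial segment. The payoff of this choice is that the three products $\prod U_i$, $\prod V_i$, and $\prod U_i V_i$ each take the form $p \cdot g^2(q) g^3(q) g^4(q)\cdots$ for suitable finite $p, q$, which one checks directly to be fixed by $g$; since $g$ has a unique fixed point starting with $0$, all three equal $w$, and the global prefix condition is verified wholesale rather than position by position. To repair your proposal you would need either to produce such explicit blocks yourself, or to formulate and prove the transition-system invariant you allude to---neither of which is present.
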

\begin{proof}
    In what follows we use the notation $x=v^{-1}u$ meaning that $u=v
    x$ for finite words $x, u, v$. We are going to show that the
    self-shuffle is given by the following:

    \small\begin{align*}& U_0=g^2(0),& &\hskip-9pt U_1=0,& &\hskip-10pt \dots,\hskip-8pt & &U_{6i+2}=g^i(0^{-1}g(0)0),&
    &\hskip-6pt U_{6i+3}=g^{i}(0^{-1}g(3)0),\\& & & & & &
    &U_{6i+4}=g^{i}(0^{-1}g(201)0),& & \hskip-6pt U_{6i+5}=g^{i}(30),
    \\& & & & & &
    &U_{6i+6}=g^{i}(2g(03)),& &
    \hskip-6pt U_{6i+7}=g^{i+1}(20), \\
    & V_0=g(0)03,& &\hskip-9pt V_1=2g(2)0,& &\hskip-10pt \dots,
    \hskip-8pt & &V_{6i+2}=g^{i}(0^{-1}g(1)0),& & \hskip-6pt
    V_{6i+3}=g^{i}(0^{-1}g(03)0),\\& & & & & & &V_{6i+4}=g^{i}(1),& &
    \hskip-6pt V_{6i+5}=g^{i}(3),\\& & & & & & &V_{6i+6}=g^{i+1}(0),&
    & \hskip-6pt V_{6i+7}=g^{i+1}(0^{-1}g(2)0).
    \end{align*}\normalsize

    Now we verify  that $$w=\prod_{i=0}^\infty U_i V_i
    =\prod_{i=0}^\infty U_i =\prod_{i=0}^\infty V_i,$$ from which it
    follows that $w$ is self-shuffling. It suffices to show that each
    of the above products is fixed by $g$. Indeed, straightforward
    computations show that

    $$\prod_{i=0}^\infty U_i = g^2(0) g^2(121) g^3(121) \cdots \, , $$

    which is fixed by $g$:

    \begin{align*} g\left(\prod_{i=0}^\infty U_i\right) &= g\left(g^2(0) g^2(121) g^3(121) \cdots \right) = g^3(0) g^3(121) g^4(121) \cdots \\
                                             &= g^2(0121) g^3(121) g^4 (121) \cdots = g^2(0) g^2(121) g^3(121) \cdots =\prod_{i=0}^\infty U_i, \end{align*}
    hence $\prod_{i=0}^\infty U_i$ is fixed by $g$ and thus
    $w=\prod_{i=0}^\infty U_i$. In a similar way we show that
    $w=\prod_{i=0}^\infty V_i=\prod_{i=0}^\infty U_i V_i$.
\end{proof}

\section{Square-free self-shuffling words on three letters}

We remark that we can immediately produce a square-free
self-shuffling word over $\Sigma_3$ from $g^\omega(0)$: Charlier
et al. \cite{CKPZ} noticed that the property of being
self-shuffling is preserved by the application of a morphism.
Furthermore, Brandenburg \cite{Brandenburg} showed that the
morphism $f : \Sigma_4^* \rightarrow \Sigma_3^*$, defined by
\begin{align*}
    f(0) &= 010201202101210212, \\
    f(1) &= 010201202102010212, \\
    f(2) &= 010201202120121012, \\
    f(3) &= 010201210201021012,
\end{align*}
is square-free. Therefore, the word $f(g^\omega(0))$ is a ternary
square-free self-shuffling word, from which we can produce a
multitude of others by applying square-free morphisms from
$\Sigma_3^*$ to $\Sigma_3^*$.

\section{A word with non self-shuffling shift orbit closure}
In this section we provide a positive answer to the question from
\cite{CKPZ} whether there exists a word for which no element of
its shift orbit closure is self-shuffling.

The \emph{Hall word} $\mathcal{H}=012021012102\cdots$ is defined
as the fixed point of the morphism $h(0) =012, h(1)= 02, h(2) =1$.
Sometimes it is referred to as a \emph{ternary Thue-Morse} word.
It is well known that this word is square-free. We show that no
word in the shift orbit closure $S_\mathcal{H}$ of the Hall word
is self-shuffling. More generally, we show that if  $x$ is a
shuffle of $y$ and $z$ for $x,y,z\in S_\mathcal{H}$, then they are
pairwise different.

\begin{proposition}\label{prop_xyy} There are no words $x, y$ in the shift orbit
closure of the Hall word such that $x \in \mathscr{S}
(y,y)$.\end{proposition}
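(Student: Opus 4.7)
The plan is to argue by contradiction. Suppose $x \in \mathscr{S}(y,y)$ for some $x, y \in S_\mathcal{H}$, fix a factorization $x = \prod_{i \geq 0} U_i V_i$ with $\prod_i U_i = \prod_i V_i = y$, and work towards a contradiction. Since $y$ is infinite, infinitely many $U_i$'s and $V_i$'s are non-empty; after possibly swapping the two copies, I may take $U_0 \neq \varepsilon$.

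The first ingredient is a simple \emph{square-forcing} observation. Both $U_0$ and $V_0$ are prefixes of the same infinite word $y$, so one is a prefix of the other. If $V_0 \neq \varepsilon$ and $|U_0| \leq |V_0|$, then $V_0 = U_0 W$ for some $W$, and $x$ begins with $U_0 V_0 = U_0 U_0 W$, exhibiting the square $U_0 U_0$ and contradicting the square-freeness of every element of $S_\mathcal{H}$. So $|U_0| > |V_0|$, or $V_0 = \varepsilon$; a variant of the same argument at the first non-empty $V_k$ (in case $V_0, \ldots, V_{k-1}$ are all empty) yields $|U_0|+\cdots+|U_k| > |V_k|$ at the first actual meeting of the two copies in $x$. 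From that point on, the tail of $x$ is the infinite shuffle of two strictly different shifts of $y$, both still in $S_\mathcal{H}$.

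Next I would combine this with the very restrictive factor set of $\mathcal{H}$. Every $w \in S_\mathcal{H}$ shares the same length-$3$ factors $\{012,020,021,101,102,120,121,201,202,210\}$, so forbidden trigrams such as $010$, $212$, and every non-square-free trigram do not appear in $x$ or $y$. At every shuffle boundary in $x$, the two or three letters straddling the boundary are specific letters of $y$ dictated by the current cursor positions, and the resulting short window must lie in the list above. The additional structural tool is the recognizability of $h$: every element of $S_\mathcal{H}$ has a unique decomposition into $h$-blocks $h(0)=012$, $h(1)=02$, $h(2)=1$, with every $0$ marking the start of a block and the next letter deciding whether it is $h(0)$ or $h(1)$. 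A finite local case analysis then shows that a shuffle boundary cannot fall strictly inside any $h$-block of $x$ without creating a forbidden short factor (an $aa$, a $010$, a $212$, etc.), so every shuffle boundary must coincide with an $h$-block boundary of $x$.

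The main obstacle is precisely this alignment step, and it is where I would spend the bulk of the work: the check consists of enumerating, for each of the short blocks $h(0)$, $h(1)$, $h(2)$, the ways a shuffle boundary can split it, and exhibiting in each case an explicit forbidden trigram in $x$. Once alignment is established, each $U_i$ and $V_i$ is a concatenation of whole $h$-blocks, and applying $h^{-1}$ simultaneously to $x$ and $y$ produces a strictly shorter self-shuffle $x' \in \mathscr{S}(y',y')$ with $x', y' \in S_\mathcal{H}$. Iterating this descent forces the initial block length $|U_0|$ to shrink at a geometric rate (since $h$ has expansion factor $\approx 2$), and it eventually reaches a stage where $|U_0| \leq |V_0|$, triggering the square-forcing contradiction of the second paragraph and completing the proof.
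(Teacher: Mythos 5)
Your overall strategy---force the shuffle boundaries to respect the $h$-block structure, desubstitute, and descend---is the same as the paper's, but the step you yourself flag as the main obstacle (the alignment of shuffle boundaries with $h$-blocks) is not only left undone: it cannot be completed by the local method you describe. A per-boundary analysis of short factors does not rule out a boundary falling strictly inside an $h$-block. For instance, take $U_i$ ending in $0$ while $V_i$ begins with $12$, so that the $V$-cursor in $y$ sits just after a $0$ inside an occurrence of $h(0)=012$. The letters straddling this boundary in $x$ read $\alpha\,0\,1\,2\,\beta$ with $\alpha\in\{1,2\}$ and $\beta\in\{0,1\}$, and every trigram arising this way ($101$, $201$, $012$, $120$, $121$) is a legitimate factor of $\mathcal{H}$; indeed no window of bounded length around this single boundary is forbidden, because locally $x$ simply looks like a factor of $\mathcal{H}$. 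The obstruction to such a split is global, not local. The paper's proof erases the letter $1$ from $y$, observes that what remains is $(02)^\omega$ (after normalizing so that $y$ starts with $0$ or $10$), and uses the fact that $x$ must also project to an alternating word in $\{0,2\}$ to force every block $U_i$, $V_i$ to contain an \emph{even} number of non-$1$ letters, beginning with $0$ and ending with $2$. That is a parity argument over the whole prefix, which is precisely what an independent case analysis at each boundary misses. Once it is in place, each block is an $h$-image and the desubstitution goes through.

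Two smaller issues concern the endgame. Your square-forcing observation ($|U_0|>|V_0|$, else $x$ begins with the square $U_0U_0$) is correct, but it is not what terminates the descent; and the claim that the length shrinks geometrically until $|U_0|\le|V_0|$ is not guaranteed, because $h(2)=1$ has length $1$, so a block equal to $1$ does not shrink under desubstitution. The paper instead chooses $x$ with $|U_0|$ minimal positive over all such shuffles in $S_\mathcal{H}$, rules out $U_0=1$ directly (it would force $x$ to begin with $11$), and concludes $|U_0'|<|U_0|$, contradicting minimality. You would need an analogous treatment of the short blocks to close your induction.
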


\begin{proof}
Suppose the converse, i.e., there exist words $x, y\in
S_\mathcal{H}$ such that

$$ x = \prod_{i=0}^{\infty} U_i V_i, \qquad y= \prod_{i=0}^{\infty} U_i = \prod_{i=0}^{\infty} V_i.$$

Define the set $X$ of infinite words as follows:

$$X=\{x \in S_\mathcal{H} \, \mid \, x \in \mathscr{S} (y,y) \mbox{ for some } y\in S_\mathcal{H} \}.$$
In other words, $X$ consists of words in $S_\mathcal{H}$ which can
be introduced as a shuffle of some word $y$ in $S_\mathcal{H}$
with itself.
Now suppose, for the sake of contradiction, that $X$ is non empty, and
consider $x\in X$ with the first block $U_0$ of the smallest
possible positive length. We remark that such $x$ and
corresponding $y$ are not necessarily unique.

We can suppose without loss of generality that $y$ starts with $0$
or $10$. Otherwise, we exchange $0$ and $2$, consider the
morphism $0 \mapsto 1, 1 \mapsto 20, 2\mapsto 210$, and the
argument is symmetric.

It is not hard to see from the properties of the morphism $h$ that
removing every occurrence of $1$ from $y$ results in $(02)^\omega$. Hence the
blocks in the factorizations of $y$ after removal of $1$ are of the form
$(02)^i$ for some integer $i$. Thus the first letter of each block $U_i$ and
$V_i$ that is different from $1$ is $0$, and the last letter different from $1$ is $2$.

Then, $U_i$ and $V_i$ are images by the morphism $h$ of factors of
the fixed point of $h$. Therefore, there are words $x', y'\in
S_{\mathcal{H}}$ such that $x=h(x'), y=h(y'), U_i=h(U'_i),
V_i=h(V'_i),$ and $x'= \prod_{i=0}^\infty U'_i V'_i$, $y' = \prod_{i=0}^\infty U'_i =
\prod_{i=0}^\infty V'_i $.

Notice that the first block $U_0$ cannot be equal to $1$. Indeed,
otherwise $x$ starts with $11$, which is impossible, since $11$ is
not a factor of the fixed point of $h$.

Clearly, taking the preimage decreases the lengths of blocks in
the factorization (except for those equal to $1$), and since
$U_0\neq 1$, the length of the first block in the preimage is
smaller, i.e., $|U_0'|<|U_0|$. This is a contradiction with the minimality of
$|U_0|$.
\end{proof}

\begin{corollary} There are no self-shuffling words in the shift orbit
closure of $\mathcal{H}$. \end{corollary}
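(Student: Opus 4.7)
The plan is to derive this corollary as an immediate specialization of Proposition~\ref{prop_xyy}. By definition, a word $w$ is self-shuffling precisely when $w \in \mathscr{S}(w,w)$. So if some $w \in S_\mathcal{H}$ were self-shuffling, then taking $x = y = w$ would produce a pair $x, y \in S_\mathcal{H}$ with $x \in \mathscr{S}(y,y)$, directly contradicting Proposition~\ref{prop_xyy}.

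The only thing to verify is that applying the proposition with $x = y = w$ is legitimate: the proposition is stated for arbitrary $x, y$ in $S_\mathcal{H}$ and makes no requirement that they be distinct, so there is nothing to check. There is no real obstacle here; the corollary is a one-line consequence of the proposition, and I would write it exactly in that form, contrasting the general case $x = y$ of the proposition with the self-referential definition of self-shuffling.
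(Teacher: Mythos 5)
Your proposal is correct and is exactly the argument the paper intends: the corollary is stated without proof precisely because a self-shuffling $w\in S_\mathcal{H}$ would give $x=y=w$ with $x\in\mathscr{S}(y,y)$, contradicting Proposition~\ref{prop_xyy}. Nothing further is needed.
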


With a similar argument we can prove the following:

\begin{proposition}\label{prop_xxy} There are no words $x, y$ in the shift orbit
closure of $\mathcal{H}$ such that $x \in \mathscr{S} (x,y)$.
\end{proposition}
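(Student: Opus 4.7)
The plan is to mimic the proof of Proposition~\ref{prop_xyy} essentially verbatim, with minor adjustments reflecting the fact that now only one of the two side identities, namely $\prod U_i = x$, concerns $x$; the other, $\prod V_i = y$, still concerns $y \in S_\mathcal{H}$, and this is what makes the argument go through. I would suppose a counterexample exists---$x, y \in S_\mathcal{H}$ with $x \in \mathscr{S}(x, y)$ via blocks $(U_i, V_i)_{i \geq 0}$---and pick a witness minimizing $|U_0|$ among those with $|U_0|$ positive. Applying the $0 \leftrightarrow 2$ swap (and replacing $h$ by the symmetric morphism $0 \mapsto 1$, $1 \mapsto 20$, $2 \mapsto 210$) if necessary lets me assume $y$ starts with $0$ or $10$, so that removing every $1$ from $y$ yields $(02)^\omega$.

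The crux is to show that removing the $1$'s from each block $U_i$ and $V_i$ produces a word of the form $(02)^{k_i}$ for some $k_i \geq 0$. For the $V_i$'s this follows exactly as in Proposition~\ref{prop_xyy}, using $y = \prod V_i$ and the $(02)^\omega$-structure of $y$'s $1$-removed word. For the $U_i$'s, the two identities $x = \prod U_i = \prod U_i V_i$ together with the $V$-alignment couple the $U$-blocks to the $(02)$-pattern of the $1$-removed form of $x$; using the absence of $11$ in $S_\mathcal{H}$, I show inductively that no $U_i$ can end (before its trailing $1$'s) in a stray $0$, since this would force the following $V$-block to begin with $2$ after $1$-removal, contradicting the $V$-alignment.

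From this alignment it follows that each $U_i$ and $V_i$ is a concatenation of $h$-images, because the factors of $\mathcal{H}$ whose $1$-removed form is $(02)^k$ are precisely such concatenations. Writing $U_i = h(U'_i)$, $V_i = h(V'_i)$ and using injectivity of $h$ on $\Sigma_3^\omega$ (which holds because $\{012, 02, 1\}$ is a prefix code), I obtain a new witness $(U'_i, V'_i)$ for $x' \in \mathscr{S}(x', y')$ in $S_\mathcal{H}$, with $|U'_0| < |U_0|$ unless $U_0 = 1$, contradicting minimality. The edge case $U_0 = 1$ is ruled out as in Proposition~\ref{prop_xyy}: after reindexing to make $V_0$ nonempty, the prefix $V_0$ of $y$ must begin with $1$ in the relevant subcase (where $y$ begins with $10$), forcing $x = U_0 V_0 \cdots$ to start with $11$.

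The step I expect to be the main obstacle is the alignment claim for the $U_i$-blocks: because the $U_i$'s factorize $x$ rather than $y$, the word obtained by removing $1$'s from $x$ is not \emph{a priori} of the same $(02)^\omega$-type as that from $y$, and the argument must verify, via the interplay between the three factorizations and the forbidden factor $11$, that $x$ itself must be of $0$-type whenever $y$ is. Once this is in hand, the remaining desubstitution and minimality steps adapt from Proposition~\ref{prop_xyy} with only cosmetic changes.
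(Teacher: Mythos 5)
Your overall template (normalize, show the blocks are $h$-images, desubstitute, contradict minimality of the first block) is the right one, but the quantity you minimize, $|U_0|$, carries no information in this setting, and the descent breaks down. Since $x=\prod U_iV_i$ and $x=\prod U_i$ both begin with $U_0$, that block is merely a common prefix that cancels from both decompositions of $x$: given any witness you may split off the first letter of $U_0$ as a new $U$-block and insert an empty $V$-block after it (empty blocks are allowed by the definition of $\mathscr{S}$), so the minimal positive $|U_0|$ is always $1$. Your descent $|U'_0|<|U_0|$ must therefore be applied to a length-one block, where it fails ($U_0=1=h(2)$ gives $|U'_0|=|U_0|=1$), and your proposed exclusion of $U_0=1$ is incorrect: it produces the forbidden factor $11$ only when $V_0$ begins with $1$, whereas after your normalization $y$ may begin with $0$, in which case $x=1V_0\cdots$ starts with the perfectly legitimate factor $10$. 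In Proposition~\ref{prop_xyy} the contradiction came from $U_0$ and $V_0$ both being prefixes of the \emph{same} word $y$ — that doubling is exactly what is absent here for $U_0$. Relatedly, your claim that $x$ must be of the same $(02)$-type as $y$ is unsupported: only $U_0^{-1}x$ is forced to share the prefix $V_0$ with $y$, so $x$ itself may begin with $2$ while $y$ begins with $0$.

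The paper's proof repairs precisely this point. From $x\in\mathscr{S}(x,y)$ it passes to $z=U_0^{-1}x$, which satisfies $z\in\mathscr{S}_2(z,y)$, i.e., a shuffle whose first nonempty block is a $V$-block, and then minimizes $|V_0|$ over all such witnesses. There $V_0$ is a common prefix of the two interleaved words $z$ and $y$, so the machinery of Proposition~\ref{prop_xyy} (exclusion of a length-one first block via the forbidden factor $11$, agreement of types, desubstitution with a strict decrease of $|V_0|$) applies verbatim. You need this reduction, or some equivalent replacement for the minimized quantity; as written, your induction has no base to stand on.
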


\begin{proof} First we introduce a notation $x\in \mathscr{S}_2
(y,z)$, meaning that there exists a shuffle starting with the word
$z$ (i.e., $U_0=\varepsilon$, $V_0 \neq \varepsilon$). Next, $x \in
\mathscr{S} (x,y)$ implies that there exists $z$ in the same shift
orbit closure such that $z \in \mathscr{S}_2 (z,y)$. Indeed, one
can remove the prefix $U_0$ of $x$ to get $z$, i.e., $z=(U_0)^{-1}x$, and
keep all the other blocks $U_i$, $V_i$ in the shuffle product.

Define the set $Z$ of infinite words as follows:

$$Z=\{z \in S_{\mathcal{H}} \, \mid \, z \in \mathscr{S}_2 (z,y) \mbox{ for some } y\in S_\mathcal{H} \}.$$
In other words, $Z$ consists of words in $S_\mathcal{H}$ which can
be introduced as a shuffle of some word $y$ in $S_\mathcal{H}$
with $z$ starting with the block $V_0$. Now consider $z\in Z$ with
the first block $V_0$ of the smallest possible length. We remark
that such $z$ and a corresponding $y$ are not necessarily unique.

As in the proof of Proposition \ref{prop_xyy}, the shuffle cannot
start with a block of length $1$. Again, if we remove every occurrence of $1$ in
$y$ (and in $z$), we get $(02)^\omega$ or $(20)^\omega$; moreover,
since $V_0$ contains letters different from $1$, the first letter different from $1$ is the same in $y$ and $z$.
So, without loss of generality we assume that
both $y$ and $z$ without $1$ are $(02)^\omega$, and the blocks
$U_i$ and $V_i$  without $1$ are integer powers of $02$. Then,
$U_i$ and $V_i$ are images by the morphism $h$ of factors of
$\mathcal{H}$. Therefore, there are words $z', y'\in
S_\mathcal{H}$ such that $z=h(z'), y=h(y'), U_i=h(U'_i),
V_i=h(V'_i),$ and $z'= \prod_{i=0}^\infty (U'_i V'_i) = \prod_{i=0}^\infty V'_i$, $y' = \prod_{i=0}^\infty
U'_i $ (i.e., $z'\in Z$).

As in the proof of Proposition \ref{prop_xyy}, since $V_0\neq 1$,
the length of the first block in the preimage is smaller, i.e.,
$|V_0'|<|V_0|$. This is again a contradiction with the minimality of $|V_0|$.
\end{proof}

So, we proved that if there are three words $x, y, z$ in the shift
orbit closure of the fixed point of $h$ such that $x \in
\mathscr{S} (y,z)$, then they should be pairwise distinct. Now we
are going to prove that for any infinite word there exist three
different words in its shift orbit closure such
that $x \in \mathscr{S} (y,z)$.

An infinite word $x$ is called \emph{recurrent}, if each its
prefix occurs infinitely many times in~it.

\begin{proposition}\label{prop_xyz} Let $x$ be a recurrent infinite word. Then there exist two words $y, z$ in the shift orbit
closure of $x$ such that $x \in \mathscr{S}
(y,z)$.\end{proposition}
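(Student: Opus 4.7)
The plan is to exploit the recurrence of $x$ by building the shuffle inductively, using returns of long prefixes of $x$ to align block boundaries. A preliminary observation is that recurrence implies every factor of $x$---not only every prefix---occurs infinitely often in $x$, since a factor $x[i,j]$ appears inside each of the infinitely many occurrences of the longer prefix $x[0,j]$. This gives us the flexibility to place blocks starting with arbitrarily long prefixes of $x$ at arbitrarily large positions.

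Fix a rapidly growing sequence $L_1<L_2<\cdots$ tending to infinity, and inductively choose positions $0<p_1<p_2<\cdots$ with $p_{i+1}>p_i+L_i$ such that $x[p_i,p_i+L_i-1]=x[0,L_i-1]$; that is, each $p_i$ is a return position of the prefix of length $L_i$. Setting $p_0=0$, define $U_i=x[p_{2i},p_{2i+1}-1]$ and $V_i=x[p_{2i+1},p_{2i+2}-1]$, so that $x=\prod_{i\geq 0}U_iV_i$ is a valid shuffle. Let $y=\prod U_i$ and $z=\prod V_i$; my aim is to show $y,z\in S_x$.

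To see $y\in S_x$ (the argument for $z$ is symmetric), note that each $U_i$ with $i\geq 1$ begins with the prefix $x[0,L_{2i}-1]$ of $x$, while $p_{2k-1}$ being a return of length $L_{2k-1}$ means that, inside $x$, the letters right after position $p_{2k-1}$ also spell out $x[0,L_{2k-1}-1]$. So for a finite factor $w$ of $y$ of length $M$ occurring across the junction between $U_{k-1}$ and $U_k$ with $L_{2k-1}\geq M$, one checks that $w$ coincides letter-for-letter with the factor of $x$ obtained by continuing from position $p_{2k-1}$ inside $x$. Choosing the $L_i$'s large enough that every finite factor of $y$ admits at least one such occurrence yields $y\in S_x$.

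The main obstacle is the case of a factor $w$ of $y$ that spans more than one junction (for instance $U_{k-1},U_k,U_{k+1}$): the direct alignment argument must then be iterated across each junction, and the growth of the $L_i$'s has to dominate the cumulative block lengths involved. A careful quantitative choice of $L_i$ in terms of the previously selected $p_j$'s should suffice, since recurrence allows arbitrarily large returns at every stage; alternatively, one can run a compactness/K\"onig argument on the tree of partial shuffles of $x[0,N-1]$ into $y_N,z_N$ that are factors of $x$ and have unbounded lengths as $N\to\infty$, and extract $y,z\in S_x$ as a limit.
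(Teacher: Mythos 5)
There is a genuine gap, and it is exactly at the point you flag but then underestimate. Your single-junction argument is fine: since $U_k$ begins with the prefix $x[0,L_{2k}-1]$ and $p_{2k-1}$ is a return of the prefix of length $L_{2k-1}<L_{2k}$, any factor of $y$ of the form (suffix of $U_{k-1}$)(prefix of $U_k$ of length at most $L_{2k-1}$) occurs in $x$, ending near position $p_{2k-1}+L_{2k-1}$. But that is the \emph{only} occurrence of ``$U_{k-1}$ followed by the start of $U_k$'' that your construction controls, and its left context in $x$ is whatever precedes position $p_{2k-2}$, which has nothing to do with $U_{k-2}=x[p_{2k-4},p_{2k-3}-1]$. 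Hence a factor of $y$ spanning two junctions --- say the last letter of $U_{k-2}$, all of $U_{k-1}$, and the first letter of $U_k$ --- need not occur in $x$ at all, and no growth condition on the $L_i$ can help: the defect is one of left contexts, not of lengths, because a return of a prefix of $x$ says nothing about what precedes the return position. Your two proposed repairs do not close this. Faster-growing $L_i$ only lengthen the guaranteed right-extension at each junction. The K\"onig/compactness route would need, for every $N$, a decomposition of $x[0,N-1]$ as a shuffle of two factors of $x$ whose lengths both tend to infinity with $N$; producing such decompositions is essentially the whole problem, so compactness by itself buys nothing.

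The paper's construction repairs precisely this point by strengthening the induction hypothesis: at every stage the \emph{entire} partial products $\prod_{l=0}^{k-1}U_l$ and $\prod_{l=0}^{k-1}V_l$ (not merely a long prefix of the next block) are maintained as suffixes of the current prefix $x[0,\sum_{l<k}(|U_l|+|V_l|)-1]$, occurring at positions $i_{k-1}$ and $j_{k-1}$. The next block $U_k$ is then the segment of $x$ running from the end of that prefix up to the end of a freshly chosen occurrence of the whole word $\prod_{l<k}V_l$ at some position $j_k>j_{k-1}$ (such an occurrence exists by recurrence). This simultaneously makes $\prod_{l\le k}U_l$ occur in $x$ at $i_{k-1}$ (it just continues along $x$ from the old occurrence) and restores the invariant on the $V$-side; the step defining $V_k$ is symmetric. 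With this invariant every prefix of $y$ and of $z$ is literally a factor of $x$, so $y,z\in S_x$ with no junction analysis at all. To salvage your framework you would have to choose $p_{2k}$ as a return position not of a prefix of $x$ but of the already-built word $U_0\cdots U_{k-1}$ (and likewise on the $V$-side) --- which is exactly the paper's construction.
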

\begin{proof} We build the shuffle inductively.

Start from any prefix $U_0$ of $x$. Since $x$ is recurrent, each
of its prefixes occurs infinitely many times in it. Find another
occurrence of $U_0$ in $x$ and denote its position by $i_1$. Put
$V_0= x[|U_0|, i_1+|U_0|-1]$.


At step $k$, suppose that the shuffle of the prefix of $x$ is
built:
$$x[0, \Sigma_{l=0}^{k-1}(|U_l|+|V_l|)-1]=\prod_{l=0}^{k-1} U_l V_l, \quad
y[0, \Sigma_{l=0}^{k-1}|U_l|-1]=\prod_{l=0}^{k-1} U_l, \quad z[0,
\Sigma_{l=0}^{k-1} |V_l|-1]=\prod_{l=0}^{k-1} V_l, $$ such that
$\prod_{l=0}^{k-1} U_l$ is the suffix of $x[0,
\Sigma_{l=0}^{k-1}(|U_l|+|V_l|)-1]=\prod_{i=0}^{k-1} U_l V_l$
starting at position $i_{k-1}$, and  
$\prod_{l=0}^{k-1} V_l$ is the suffix of $x[0,
\Sigma_{l=0}^{k-1}(|U_l|+|V_l|)-1]=\prod_{i=0}^{k-1} U_l V_l$
starting at position $j_{k-1}$.

Find another occurrence of $\prod_{l=0}^{k-1} V_l$ 
in $x$ at some
position $j_k > j_{k-1}$. We can do it since $x$ is recurrent. Put
$U_k= x[\Sigma_{l=0}^{k-1}(|U_l|+|V_l|), 
j_k-1+\Sigma_{l=0}^{k-1}|V_l|]$. We note that 
$\prod_{l=0}^{k} U_l$ is a
factor of $x$ by the construction; more precisely, it occurs at
position $i_{k-1}$.

Find an occurrence of $\prod_{l=0}^{k} U_l$ at some position
$i_k>i_{k-1}$, put $V_k= x[\Sigma_{l=0}^{k-1}(|U_l|+|V_l|) +
|U_k|, i_k-1+\Sigma_{l=0}^{k}|U_l|]$. As above, $\prod_{l=0}^{k} V_l$
is a factor of $x$ by the construction since it occurs at
position $j_{k-1}$. Moreover, both $\prod_{l=0}^{k} U_l$  and $\prod_{l=0}^{k} V_l$ 
are suffixes of $x[0,
\Sigma_{l=0}^{k}(|U_l|+|V_l|)-1]=\prod_{i=0}^{k} U_l V_l$.

Continuing this line of reasoning, we build the required
factorization.
\end{proof}

Since each infinite word contains a recurrent (actually, even a
uniformly recurrent) word in its shift orbit closure, we obtain
the following corollary:
\begin{corollary}\label{col:xyz} Each infinite word $w$ contains words $x, y, z$ in its shift orbit
closure such that $x \in \mathscr{S} (y,z)$.
\end{corollary}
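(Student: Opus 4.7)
The plan is to combine Proposition~\ref{prop_xyz} with the standard fact, hinted at in the sentence preceding the corollary, that every infinite word admits a (uniformly) recurrent word in its shift orbit closure. Given an arbitrary infinite word $w$, I will first produce a recurrent element $x \in S_w$, then apply Proposition~\ref{prop_xyz} to $x$, and finally observe that the resulting $y, z$ lie in $S_x \subseteq S_w$, yielding the desired triple.

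For the first step I would argue as follows. The shift orbit closure $S_w$, equipped with the product topology on $\Sigma^\omega$, is a nonempty compact space that is invariant under the shift map $T$. By a standard Zorn's lemma argument applied to the family of nonempty closed $T$-invariant subsets of $S_w$ ordered by inclusion, there exists a minimal such subset $M \subseteq S_w$. Any word $x \in M$ is then uniformly recurrent: indeed, for any prefix $p$ of $x$, the set of positions at which $p$ occurs in $x$ has bounded gaps, since otherwise one could extract a limit word in $M$ missing $p$ as a factor, contradicting minimality. In particular $x$ is recurrent, and $S_x = M \subseteq S_w$.

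Now I apply Proposition~\ref{prop_xyz} to the recurrent word $x$ to obtain $y, z \in S_x$ with $x \in \mathscr{S}(y, z)$. Since $S_x \subseteq S_w$ and also $x \in S_w$, the three words $x, y, z$ all lie in the shift orbit closure of $w$, completing the proof.

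There is essentially no obstacle here; the entire content of the corollary is the reduction to the recurrent case handled in Proposition~\ref{prop_xyz}, together with the well-known existence of minimal subsystems in any shift orbit closure. The only thing to be careful about is to invoke a version of recurrence strong enough that Proposition~\ref{prop_xyz} applies, but uniform recurrence (and hence recurrence) of any point in a minimal subsystem is exactly what is needed, so the argument is short.
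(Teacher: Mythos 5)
Your proposal is correct and follows exactly the route the paper intends: the paper derives the corollary in one line from Proposition~\ref{prop_xyz} together with the stated fact that every shift orbit closure contains a (uniformly) recurrent word. You merely spell out the standard minimal-subsystem argument that the paper leaves implicit, which is a fine (and accurate) elaboration rather than a different approach.
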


The following example shows that the recurrence condition in
Proposition \ref{prop_xyz} cannot be omitted:

\begin{example} Consider the word $3\mathcal{H} = 3012021\cdots$ which
is obtained from $\mathcal{H}$ by adding a letter $3$ in the
beginning. Then the shift orbit closure of $3\mathcal{H}$ consists
of the shift orbit closure of $\mathcal{H}$ and the word
$3\mathcal{H}$ itself. Assuming $3\mathcal{H}$ is a shuffle of two words in
its shift orbit closure, one of them is $3\mathcal{H}$ (there are
no other $3$'s) and the other one is something in the shift orbit
closure of $\mathcal{H}$, we let $y$ denote this other word.
Clearly, the shuffle starts with $3\mathcal{H}$, and cutting the
first letter $3$, we get $\mathcal{H}\in
\mathscr{S}(\mathcal{H},y)$, a contradiction with Proposition~\ref{prop_xxy}.
\end{example}

There also exist examples where each letter occurs infinitely many
times:

\begin{example} The following word:
$$x=012001120001112\cdots0^k1^k2\cdots$$ does not have
two words $y,z$ in its shift orbit closure such that $x \in
\mathscr{S} (y,z)$. The idea of the proof is that the shift orbit
closure consists of words of the following form: $1^*20^{\omega}$,
$0^*1^{\omega}$, $x$ itself and all their right shifts. Shuffling
any two words of those types, it is not hard to see that there
exists a prefix of the shuffle which contains too many or too few
occurrences of some letter compare to the prefix of $x$. We leave
the details of the proof to the reader.
\end{example}

By Corollary~\ref{col:xyz}, there are $x,y,z$ in the shift
orbit closure of $\mathcal{H}$ such that $x\in \mathscr{S}(y,z)$.
To conclude this section, we give an explicit construction of two
words in the shift orbit closure of $\mathcal{H}$ which can be
shuffled to give $\mathcal{H}$. We remark though that this
construction gives a shuffle different from the one given by
Corollary~\ref{col:xyz}. Let:
$$h:\left\{
\begin{array}{lll}
0&\mapsto&012\\
1&\mapsto&02\\
2&\mapsto&1\\
\end{array}\right. \text{ and ~ }
h':\left\{
\begin{array}{lll}
0&\mapsto&210\\
1&\mapsto&20\\
2&\mapsto&1.\\
\end{array}\right.$$

By definition, the shift orbit closure of the Hall word is closed
under $h$. Moreover this shift orbit closure is also closed under $h'$,
since the factors of the Hall word are closed under the morphism
$0\to 2, 1\to 1, 2\to 0$.
\small
$$
h'\circ h:\left\{
\begin{array}{lll}
0&\mapsto&210201\\
1&\mapsto&2101\\
2&\mapsto&20\\
\end{array}\right.
h\circ h':\left\{
\begin{array}{lll}
0&\mapsto&102012\\
1&\mapsto&1012\\
2&\mapsto&02\\
\end{array}\right.
h^2:\left\{
\begin{array}{lll}
0&\mapsto&012021\\
1&\mapsto&0121\\
2&\mapsto&02\\
\end{array}\right.
h'^2:\left\{
\begin{array}{lll}
0&\mapsto&120210\\
1&\mapsto&1210\\
2&\mapsto&20.\\
\end{array}\right.$$
\normalsize Note that if $w$ is an infinite word, then $2 (h\circ
h')(w) = (h'\circ h)(w)$ and $0h'^2(w)=h^2(w)$.

\def\prodi{\prod_{i=0}^\infty}


\begin{theorem}
$h^\omega(0) \in \mathscr{S}
(h^2((h'^2)^\omega(1)),h'^3(h^\omega(0)))$.
\end{theorem}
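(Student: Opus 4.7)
The plan is to exhibit the shuffle explicitly, using the two identities recorded just before the statement. Write $\alpha := (h'^2)^\omega(1)$, $A := h^2(\alpha)$, and $B := h'^3(\mathcal{H})$, where $\mathcal{H} := h^\omega(0)$.

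Apply $0\, h'^2(w) = h^2(w)$ to $w = \alpha$, which is fixed by $h'^2$: this gives $A = h^2(\alpha) = 0\,h'^2(\alpha) = 0\alpha$. Apply it to $w = \mathcal{H}$, using $\mathcal{H} = h^2(\mathcal{H})$: this gives $\mathcal{H} = 0\,h'^2(\mathcal{H})$, equivalently $h'^2(\mathcal{H}) = 0^{-1}\mathcal{H}$, and hence $h'(\mathcal{H}) = h'(0)\cdot h'^3(\mathcal{H}) = 210\cdot B$. So, setting $U_0 := 0$ to match the leading zeros of $\mathcal{H}$ and $A$, the theorem reduces to the claim $h'^2(\mathcal{H}) \in \mathscr{S}(\alpha, B)$.

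Both $\alpha = h'^2(\alpha)$ and $B = h'^2(h'(\mathcal{H}))$ are $h'^2$-images, so a natural first instinct is to deduce this reduced claim from $\mathcal{H} \in \mathscr{S}(\alpha, h'(\mathcal{H}))$ by applying $h'^2$ and using the fact that morphisms preserve shuffles. However, this shortcut fails at the very first letter: $\mathcal{H}[0] = 0$ cannot be produced from $\alpha[0] = 1$ or $h'(\mathcal{H})[0] = 2$. So the blocks $U_i, V_i$ of the shuffle must cross $h'^2$-block boundaries, and a direct construction is needed. I would construct $U_i, V_i$ for $i \geq 1$ inductively, using the decompositions
\[ \alpha = \prod_{j \geq 0} h'^2(\alpha[j]), \qquad B = \prod_{j \geq 0} h'(\mathcal{H}[j+1]), \qquad \mathcal{H} = \prod_{j \geq 0} h^2(\mathcal{H}[j]) \]
to align positions.

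The main obstacle will be maintaining an alignment invariant: after each completed $U_i V_i$ pair, the positions reached in $\mathcal{H}$, in $\alpha$ and in $B$ should coincide with compatible block boundaries of the three decompositions above. Here I expect the second identity $2(h\circ h')(w) = (h'\circ h)(w)$ to provide the crucial compatibility between the $h^2$-blocks of $\mathcal{H}$ on one side and the interlocking $h'^2$-blocks of $\alpha$ together with $h'$-blocks of $B$ on the other; this should reduce the induction step to a finite case check over the three letters $c \in \{0,1,2\}$ driving the current $h^2$-block of $\mathcal{H}$, where in each case one specifies how $h^2(c)$ is split between a prefix consumed from $\alpha$ and a prefix consumed from $B$. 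Once the invariant is verified, iteration yields the three required products $\prod U_i V_i = \mathcal{H}$, $\prod U_i = A$, and $\prod V_i = B$ simultaneously.
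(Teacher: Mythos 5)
Your preliminary reductions are correct and match identities the paper also exploits: from $0\,h'^2(w)=h^2(w)$ you correctly get $h^2(\alpha)=0\alpha$ and $h'^2(\mathcal{H})=0^{-1}\mathcal{H}$, and peeling off $U_0=0$ (with $V_0=\varepsilon$) does reduce the theorem to $0^{-1}\mathcal{H}\in\mathscr{S}(\alpha,\,h'^3(\mathcal{H}))$. You are also right that a one-shot application of ``morphisms preserve shuffles'' fails here. But from that point on you have a plan, not a proof: the theorem asserts the existence of a factorization, and your argument never exhibits one. The blocks $U_i,V_i$ are not written down, the ``alignment invariant'' is never stated precisely, and the ``finite case check over the three letters'' is never performed. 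Everything that would constitute the actual content of the proof is deferred with ``I would construct'', ``I expect'', ``should reduce''. That is a genuine gap, not a presentational one.

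Moreover, the granularity you propose (one $h^2$-block of $\mathcal{H}$ per step, split letterwise between $\alpha$ and $B$) does not obviously match what is needed. The paper's witness has blocks that grow geometrically: after a short initial segment $U_0=01$, $U_1=h'(0)$, $U_2=h'(1)$, $V_0=h'(1)$, it takes $U_{i+3}=V_{i+1}=h'^2(h^i(1))$, so each step consumes an entire $h'^2(h^i(1))$, not a single $h^2$-block of $\mathcal{H}$; and indeed the partial sums of the paper's block lengths do not land on $h^2$-block boundaries of $\mathcal{H}$ (they are offset by the leading $0$). The verification then rests on a telescoping identity you never derive, namely $h'(\mathcal{H})=\prod_{i\geq 0}h^i(2)$ (a consequence of $2(h\circ h')(w)=(h'\circ h)(w)$ applied to $w=\mathcal{H}$), which is what lets one recognize $\prod V_i$ as $h'^3(\mathcal{H})$ and $\prod U_iV_i$ as $\mathcal{H}$. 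Without an explicit candidate factorization and a concrete verification of the three infinite products, the claim is not established; to complete your argument you would need to either find a per-letter splitting and actually check all cases (including that the letters, not just the lengths, match), or switch to growing blocks as the paper does.
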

\begin{proof}
Let 
$$U_0=01, \, U_1=h'(0), \, U_2=h'(1), \, V_0=h'(1),$$
and for every $i\ge 0$,
$$U_{i+3} = h'^2(h^i(1)) \text{ and } V_{i+1}=h'^2(h^i(1)).$$
Let furthermore
$$u=\prod_{i=0}^\infty U_i, \, v=\prod_{i=0}^\infty V_i, \text{ and } w=\prod_{i=0}^\infty U_i V_i \, .$$
We show that $w= h^\omega(0)$, $u=h^2((h'^2)^\omega(1))$ and
$v=h'^3(h^\omega(0))$.

Note that $2h(h'(h^\omega(0)))= h'(h^\omega(0))$, thus
$h'(h^\omega(0))=\prodi h^i(2)$. Then we have $$v=20
\prod_{i=0}^\infty h'^2(h^i(1)) = h'^2 \left (\prod_{i=0}^\infty
h^i(2)\right )= h'^3(h^\omega(0)).$$ Moreover,
\begin{align*} 
    u  &= 01 210 20 \prod_{i=0}^\infty h'^2(h^i(1))= 01210h'^2\left (\prodi h^i(2)\right ) \\
       &= 01210h'^3\left(h^\omega(0)\right)= 01h'(0h'^2(h^\omega(0)))=
01h'(h^\omega(0))=0 h'(2h^\omega(0)).
\end{align*} 
Since
$h'^2(2h^\omega(0)) = 20 h'^2(h^\omega(0))= 2 h^\omega(0)$, 
the word $2h^\omega(0)$ is the fixed point
$(h'^2)^\omega(2)$ of $h'^2$, and then $h'(2h^\omega(0))$ is the fixed point
$(h'^2)^\omega(1)$. Thus $u  = 0 (h'^2)^\omega(1) =
h^2((h'^2)^\omega(1))$. Finally:
$$w=0120210121020\prod_{i=0}^\infty h'^2(h^i(021))
    =012 021 h(021) h^2\left (\prod_{i=0}^\infty h^i(021) \right)
=     012 \prod_{i=0}^\infty h^i(021).$$

Applying the morphism $h$ to the second expression for $w$, we get
$$h(w)=     012021 h\left (\prod_{i=0}^\infty h^i(021) \right ) = 012
\prod_{i=0}^\infty h^i(021).$$


Thus $w= h^\omega(0)$ since $h$ is injective.
\end{proof}

\section{Conclusion and open question}
We showed that infinite square-free self-shuffling words exist.
The natural question that arises now is whether we can find
infinite self-shuffling words subject to even stronger
avoidability constraints: For this we recall the notion of
\emph{repetition threshold} $RT(k)$, which is defined as the least
real number such that an infinite word over $\Sigma_k$
exists, that does not contain repetitions of exponent greater than
$RT(k)$.
Due to the collective effort of many researchers
~(see \cite{CurrieR11,Rao11} and references therein), the
repetition threshold for all alphabet sizes is known and
characterized as follows:
$$RT(k) =
  \begin{cases}
   \frac{7}{4}       & \text{if } k=3 \\
   \frac{7}{5} & \text{if } k=4 \\
   \frac{k}{k-1} & \text{else}.
  \end{cases}
$$
A word $w \in \Sigma_k^\omega$ without factors of exponent greater
than $RT(k)$ is called a \emph{Dejean word}. Charlier et al.
showed that the Thue-Morse word, which is a binary Dejean word, is
self-shuffling \cite{CKPZ}.

\begin{question} Do there exist self-shuffling Dejean words over
non-binary alphabets? \end{question}

\bibliographystyle{abbrv}

\end{document}